\newcommand{\bra}[1]{\mbox{$\left\langle #1 \right|$}}
\newcommand{\ket}[1]{\mbox{$\left| #1 \right\rangle$}}
\newcommand{\braket}[2]{\mbox{$\left\langle #1 | #2 \right\rangle$}}
\def\tr{{\rm Tr}}
\def\IR{{\mathbb R}}
\def\IC{{\mathbb C}}
\newcommand{\maxnorm}[1]{\left\lVert #1 \right\rVert_{\text{max}}}
\DeclareMathOperator{\myUgrp}{U}
\DeclareMathOperator{\myRe}{Re}
\def\dmathX#1#2{
$$\lineskiplimit=1000pt \advance\lineskip by #1\jot 
\mathsurround=0pt \tabskip=0pt plus 1000pt
\everycr{\noalign{\penalty\interdisplaylinepenalty}}
\halign to \displaywidth{
\hfil$\displaystyle{##}$\tabskip=0pt&%
\hfil $\displaystyle{{}##{}}$\hfil &%
\hfil $\displaystyle{{}##{}}$\hfil &%
$\displaystyle{##}$\hfil \tabskip=0pt plus 1000pt minus 1000pt&%
\refstepcounter{equation}\label{##}\llap{(\theequation)}\tabskip=0pt\cr
\noalign{\ifdim \prevdepth>-1000pt \vskip -#1\jot\fi}
#2\crcr}$$}
\newenvironment{proof}{\noindent{\bf Proof:}}{\squareQIC}
\begin{document}
\setlength{\textheight}{8.0truein}    

\runninghead{Solution to time-energy costs of quantum channels}
            {C.-H. F. Fung, H. F. Chau, C.-K. Li, and N.-S. Sze}

\normalsize\textlineskip
\thispagestyle{empty}
\setcounter{page}{1}


\vspace*{0.88truein}

\alphfootnote

\fpage{1}

\centerline{\bf
SOLUTION TO TIME-ENERGY COSTS OF QUANTUM CHANNELS}
\vspace*{0.37truein}
\centerline{\footnotesize
CHI-HANG FRED FUNG\footnote{Present address: Huawei Noah's Ark Lab, Hong Kong Science Park, Shatin, Hong Kong.  E-mail: chffung.app@gmail.com}}
\vspace*{0.015truein}
\centerline{\footnotesize\it 
Department of Physics and Center of Theoretical and Computational Physics, }
\baselineskip=10pt
\centerline{\footnotesize\it University of Hong Kong, Pokfulam Road, Hong Kong}
\vspace*{10pt}
\centerline{\footnotesize 
H.~F. CHAU}
\vspace*{0.015truein}
\centerline{\footnotesize\it 
Department of Physics and Center of Theoretical and Computational Physics, }
\baselineskip=10pt
\centerline{\footnotesize\it University of Hong Kong, Pokfulam Road, Hong Kong}
\vspace*{10pt}
\centerline{\footnotesize 
CHI-KWONG LI}
\vspace*{0.015truein}
\centerline{\footnotesize\it Department of Mathematics, College of William \& Mary, Williamsburg, Virginia 23187-8795, USA}
\vspace*{10pt}
\centerline{\footnotesize 
NUNG-SING SZE}
\vspace*{0.015truein}
\centerline{\footnotesize\it Department of Applied Mathematics, The Hong Kong Polytechnic University, Hung Hom, Hong Kong}
\vspace*{0.225truein}

\vspace*{0.21truein}

\abstracts{
We derive a formula for the time-energy costs of general quantum channels proposed in
[Phys. Rev. A {\bf 88}, 012307 (2013)].
This formula allows us to numerically find the time-energy cost of any quantum channel using positive semidefinite programming.
We also derive 
a lower bound to the time-energy cost for any channels and
the exact 
the time-energy cost for a class of channels which 
includes the qudit depolarizing channels and projector channels as special cases.
}{}{}

\vspace*{10pt}

\keywords{Time-energy cost, quantum channel, fidelity}
\vspace*{3pt}
\communicate{to be filled by the Editorial}

\vspace*{1pt}\textlineskip 

%
%

%
%
%

\section{Introduction}

A time-energy cost of a unitary matrix 
$U \in \myUgrp(r)$
is defined as~\cite{Chau2011}
\begin{align}
\label{eqn-definition-maxnorm-for-U}
\maxnorm{U}&=\max_{1 \le j \le r}  |\theta_j|
\end{align}
where $U$ has eigenvalues 
$\exp(i \theta_j)$ for $j=1,\dots,r$.
Here, we
denote by $\myUgrp(r)$ the group of $r \times r$ unitary matrices,
and
we take the convention that $\theta_j \in (-\pi,\pi]$.
This definition of time-energy cost was motivated~\cite{Chau2011,Fung:2013:Time-energy} from time-energy uncertainty relations~\cite{Lloyd2000,Mandelstam1945}.
Essentially, this time-energy cost captures the idea that time and energy are a trade-off against each other and may be used as an indicator for the resource used by a quantum system.
In particular,
a closed quantum system with a time-independent Hamiltonian $H$ evolves from the initial state  $\ket{\psi_\text{i}}$ to the final state $\ket{\psi_\text{f}}$ according to the Schr\"{o}dinger equation:
$\ket{\psi_\text{f}} = U \ket{\psi_\text{i}}$
where
$U=\exp(-i Ht/\hbar)$ and $t$ is the evolution time.
The eigenvalues of the Hamiltonian $H$ are the energies and thus the eigenvalues of $\log U$ correspond to the time-energy products, the absolute maximum of which is the time-energy cost $\maxnorm{U}$ defined above.
Note that to implement the same information processing task characterized by $U$, one may use a high energy $H$ run for a short time or a low energy $H$ run for a long time.
The time-energy products in both cases are the same.

The definition for $\maxnorm{U}$  in Eq.~\eqref{eqn-definition-maxnorm-for-U} is for unitary quantum channels.
The time-energy cost has been extended to cover general quantum channels~\cite{Fung:2013:Time-energy}.
A quantum channel mapping $n$-dimensional density matrices to $n$-dimensional density matrices can be 
written as
\begin{align}
{\mathcal K}(\rho)=\sum_{j=1}^d K_j \rho K_j^\dag,
\end{align}
where $K_j \in \IC^{n\times n}$ are the Kraus operators and $\sum_{j=1}^d K_j^\dag K_j=I_n$.
In this paper, we only consider finite dimensional systems.
The time-energy cost for quantum channel $\mathcal K$ is defined as
the time-energy cost of the most efficient unitary extension that implements $\mathcal K$~\cite{Fung:2013:Time-energy}:
\dmathX2{
\maxnorm{\mathcal{K}} &\equiv& \min_U & \maxnorm{U}  
&eqn-energy-measure-general-channel\cr
&&
\text{s.t.} &
\mathcal{K}(\rho) = \tr_B [ U_{BA} (\ket{0}_B\bra{0} \otimes  \rho_A) U_{BA}^\dag ]
\: \forall \rho,
\cr
}
where the channel $\mathcal{K}$ acts on quantum state $\rho$ in system $A$ and the unitary extension $U_{BA}$ includes system $B$ prepared in a standard state.

The time-energy cost has an interesting informational meaning.
The cosine of this cost for a general quantum channel is exactly the worst-case entanglement fidelity of the channel~\cite{Fung:2014:fidelity}, 
establishing a connection between the physical aspect (the time-energy cost) and the information aspect (the fidelity) of quantum channels.
Fidelity is a popular quantity often used to characterize the performance of information processing tasks including
quantum key distribution (as
a security measure~\cite{Konig:2007:AccessibleInformation,Ben-Or:2005:Composable}) and
state discrimination (as the inconclusive probability~\cite{Ivanovic:1987:USD,Dieks:1988:USD,Peres:1988:USD}).
Thus the study of the time-energy cost is important from a quantum information theoretical perspective.
To be specific, the result of Ref.~\cite{Fung:2014:fidelity} shows that for any quantum channel $\mathcal K$, the worst-case entanglement fidelity $F_\text{min}({\mathcal K})$ of the channel is related to the time-energy cost by\footnote{Note that Ref.~\cite{Fung:2014:fidelity} originally shows that 
$F_\text{min}({\mathcal K})=
\max(
\cos \maxnorm{\mathcal K},0
)$.  However, we should always consider taking the freedom of including an all-zero Kraus operator in the channel representation.
In this case, $\cos \maxnorm{\mathcal K}$ is never negative. See Theorem~\ref{thm-TE-general-solution} and its proof.}
\begin{align}
\label{eqn-main-theorem-fidelity}
F_\text{min}({\mathcal K})
=
\cos \maxnorm{\mathcal K} .
\end{align}
Here, the worst-case entanglement fidelity $F_\text{min}({\mathcal K})$ is defined as
\begin{align}
\label{eqn-def-fidelity-channel}
F_\text{min}({\mathcal K})
\equiv
\min_{\ket{\Psi}}
F\big(\ket{\Psi}_{AC}\bra{\Psi},({\mathcal K}_A \otimes I_C)(\ket{\Psi}_{AC}\bra{\Psi})\big),
\end{align}
where
the channel acts on system $A$ and the fidelity is taken between the channel input state (allowed to be entangled in systems $A$ and $C$) and the corresponding output state.
Here, $F(\rho,\rho')\equiv\tr \sqrt{\rho^{1/2} \rho' \rho^{1/2}}$
is the fidelity between two mixed quantum states $\rho$ and $\rho'$~\cite{Jozsa1994,Uhlmann1976}.

This paper derives a formula for the time-energy cost $\maxnorm{\mathcal K}$ defined in Eq.~\eqref{eqn-energy-measure-general-channel} and provides a numerical solution method via semidefinite programming.
This in turn allows us to compute the the worst-case entanglement fidelity using Eq.~\eqref{eqn-main-theorem-fidelity}.
The difficulty in solving for $\maxnorm{\mathcal K}$ stems from the freedom in the unitary extension.
All the freedom we have for choosing different $U$ without changing the channel consists of 
the following operations:
\begin{enumerate}
\item Change the last $(d+1)n-n$ columns of $U$.
\item Apply $V\otimes I_n$ to $U$ on the left, where $V \in \myUgrp(d+1)$.
\end{enumerate}
It turns out that one can apply an abstract mathematical result in unitary
dilation theory~\cite{Choi:2001} to solve the problem.
One can then determine the
optimal solution using semidefinite programming. Thus, we have a
theoretical optimal solution that can be determined by numerical method.
This is one of the best scenarios in solving an optimization problem
if there is a closed form for the optimal solution of the given problem.

The organization of this paper is as follows.
We solve problem~\eqref{eqn-energy-measure-general-channel} for $\maxnorm{\mathcal{K}}$ in Sec.~\ref{sec-main-result}, and we derive a lower bound to the time-energy cost for any channels and compute the exact time-energy costs for special channels in Sec.~\ref{sec-special-channels}.
We formulate in Sec.~\ref{sec-sdp} the problem of finding the time-energy cost as a semidefinite program (SDP) which can be solved numerically and efficiently.
We give some mathematical remarks in Sec.~\ref{sec-remarks} and conclude in Sec.~\ref{sec-conclusions}

\section{Main result}
\label{sec-main-result}

\begin{theorem}
\label{thm-TE-general-solution}
{\rm
\begin{align}
\label{eqn-TE-general-solution}
\maxnorm{\mathcal{K}} =
\cos^{-1}
\left[
\max_{\mathbf v}
\frac{1}{2} \lambda_\text{min} \left( K_{\mathbf v} + K_{\mathbf v}^\dag \right)
\right]
\end{align}
where ${\mathbf v} \in \IC^d$ has 
$\ell_2$-norm $\lVert \mathbf v \rVert \le 1$,
$K_{\mathbf v}=\sum_{j=1}^d v_j K_j$, $\lambda_\text{min}(\cdot)$ denotes the minimum eigenvalue of its argument,
and
we take the convention that $\cos^{-1}$ returns an angle in the range $[0,\pi]$.
}
\end{theorem}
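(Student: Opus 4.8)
The plan is to first convert the geometric quantity $\maxnorm{U}$ into a spectral one. Since $U$ is unitary it is normal, so if its eigenphases are $\theta_j\in(-\pi,\pi]$ then the Hermitian part $\myRe U=\tfrac12(U+U^\dag)$ has eigenvalues $\cos\theta_j$. Because $\cos$ is even and strictly decreasing on $[0,\pi]$, we get $\lambda_\text{min}(\myRe U)=\min_j\cos\theta_j=\cos(\max_j|\theta_j|)=\cos\maxnorm{U}$, and since $\maxnorm{U}\in[0,\pi]$ this gives $\maxnorm{U}=\cos^{-1}\lambda_\text{min}(\myRe U)$ on the principal branch. As $\cos^{-1}$ is decreasing and the unitary group is compact, minimizing over admissible dilations becomes a maximization,
\[
\maxnorm{\mathcal K}=\cos^{-1}\Big[\max_U\lambda_\text{min}(\myRe U)\Big],
\]
so the whole problem reduces to evaluating $\max_U\lambda_\text{min}(\myRe U)$.

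Next I would pin down how the Kraus data enters the leading block of $U$. Writing $\IC^{(d+1)n}=\IC^{d+1}\otimes\IC^{n}$ with $B$ the $(d+1)$-dimensional ancilla, the admissible $U$ are exactly those whose first $n$ columns equal $(V\otimes I_n)W$, where $W\ket{\psi}=\sum_j\ket{j}\otimes K_j\ket{\psi}$ is the Stinespring isometry and $V\in\myUgrp(d+1)$ encodes freedom~2, the remaining columns (freedom~1) being an arbitrary unitary completion. A short computation shows the $(0,0)$ block $\bra{0}_B U\ket{0}_B$ equals $K_{\mathbf v}=\sum_j v_j K_j$, where $\mathbf v=(V_{0j})_j$ is the first row of $V$ and hence a unit vector. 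Adjoining an extra all-zero Kraus operator enlarges $B$ by one dimension and lets the corresponding component of $\mathbf v$ be discarded; this is precisely what relaxes the constraint from $\lVert\mathbf v\rVert=1$ to $\lVert\mathbf v\rVert\le1$ and forces the bracketed quantity to be at least its value $0$ at $\mathbf v=0$ (explaining the footnote).

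With this dictionary one inequality is easy. For any admissible $U$, compressing the Hermitian matrix $\myRe U$ to the subspace $\ket{0}_B\otimes\IC^{n}$ produces $\myRe K_{\mathbf v}=\tfrac12(K_{\mathbf v}+K_{\mathbf v}^\dag)$, and eigenvalue interlacing (the smallest eigenvalue of a compression is at least that of the full matrix) gives $\lambda_\text{min}(\myRe U)\le\tfrac12\lambda_\text{min}(K_{\mathbf v}+K_{\mathbf v}^\dag)$. Maximizing over $U$ and relaxing $\mathbf v$ to the unit ball yields the ``$\le$'' half of the identity.

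The hard direction, and the \emph{main obstacle}, is the converse: given a maximizer $\mathbf v$ with $\tfrac12\lambda_\text{min}(K_{\mathbf v}+K_{\mathbf v}^\dag)=c$, I must exhibit an admissible unitary dilation $U$ matching the \emph{entire} first column block $(V\otimes I_n)W$ (not merely its $(0,0)$ block) and satisfying $\myRe U\ge c\,I$, so that $\lambda_\text{min}(\myRe U)\ge c$. This is a constrained unitary-dilation statement: an isometry whose compression to the input block has Hermitian part $\ge c I$ admits a unitary completion whose numerical range lies in the half-plane $\myRe z\ge c$ (equivalently, whose eigenphases all satisfy $|\theta|\le\cos^{-1}c$), possibly after enlarging the ancilla with further zero Kraus operators. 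I would invoke the dilation-theoretic result of Ref.~\cite{Choi:2001} to produce such a $U$; checking that its hypotheses are met by $(V\otimes I_n)W$ and that the real-part bound is preserved is the technical heart of the argument. Combining the two inequalities and applying the decreasing function $\cos^{-1}$ then yields Eq.~\eqref{eqn-TE-general-solution}.
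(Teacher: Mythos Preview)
Your proposal is correct and follows essentially the same route as the paper: the spectral reformulation $\maxnorm{U}=\cos^{-1}\lambda_{\min}(\myRe U)$, the identification of the $(0,0)$ block with $K_{\mathbf v}$, interlacing for the easy inequality, the extra zero Kraus operator to pass from $\lVert\mathbf v\rVert=1$ to $\lVert\mathbf v\rVert\le 1$, and the appeal to Ref.~\cite{Choi:2001} for the constructive direction are all exactly what the paper does.

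The one place where the paper is more explicit than your sketch is the hard direction, and it clarifies the point you flagged as the main obstacle. The paper does \emph{not} feed the full isometry $(V\otimes I_n)W$ into the Choi--Li result; it applies that result only to the single block $K_{\mathbf v}$, obtaining a $2n\times 2n$ unitary $\tilde U$ with $\tilde U_{11}=K_{\mathbf v}$, $\tilde U_{21}=\sqrt{I_n-K_{\mathbf v}^\dag K_{\mathbf v}}$, and eigenphases exactly $\pm\cos^{-1}$ of the eigenvalues of $\myRe K_{\mathbf v}$. The remaining constraint---that the full first block column equal the prescribed Kraus stack---is then met by padding $\tilde U\oplus I_{(d'-2)n}$ and conjugating by $I_n\oplus W$ for a suitable $W\in\myUgrp((d'-1)n)$; this similarity fixes the first column without touching the spectrum. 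So the ``match the whole isometry'' issue you raised is handled by an elementary post-processing step, not by the dilation theorem itself.
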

\begin{proof}
The most general form of $U$
in Eq.~\eqref{eqn-energy-measure-general-channel} is
\begin{align}
\label{eqn-U-general-form}
U=(V \otimes I_n) 
\underbrace{
\begin{bmatrix}
K_1 & * & * & \cdots & *
\\
K_2 & * & * & \cdots & *
\\
\vdots & & & & \vdots
\\
K_{d} & * & * & \cdots & *
\\
K_{d+1} & * & * & \cdots & *
\end{bmatrix}
}_{\displaystyle U'}
\end{align}
where $V \in \myUgrp(d+1)$ and only the first $n$ columns of $U'$ are fixed.
Here, we append an all-zero Kraus operator $K_{d+1}=0$ in order to make $U$ the most general unitary implementing the channel $\mathcal K$.
Certainly, both $\{K_1,\dots,K_d\}$ and $\{K_1,\dots,K_{d+1}\}$ are valid representations of $\mathcal K$. 
As we shall see, there is no need to add more than one extra all-zero operator.

We first consider the freedom in $U'$.
Let $d'=d+1$.
We want to choose the last $d'n-n$ columns of $U'$ so that its norm is the smallest.
This is described as an optimization problem as follows:
\dmathX2{
\varphi
&\equiv& \displaystyle\min_{U'} & 
\maxnorm{ U' }\cr
&&\text{s.t.}&
U'_{i1}=K_i \:\: \text{for all }i=1,\ldots,d',\cr
&&&\text{with }U' \in \myUgrp(d'n)&eqn-problem-original-min-U\cr
}
where $U'_{ij}$ denotes the $(i,j)$ block of size $n \times n$.

By the result in Ref.~\cite{Choi:2001}, we know 
that there is a unitary matrix $\tilde U = (\tilde U_{rs})_{1\le r,s \le 2} \in \myUgrp(2n)$ 
with eigenvalues
$e^{\pm i\theta_j}$ for $j = 1, \dots, n$, 
such that 
$\tilde U_{11}=K_1$ and 
$\tilde U_{21}=\sqrt{I_n - K_1^\dag K_1}$
where $\pi \ge \theta_1 \ge \cdots \ge \theta_n\ge 0$ and
$\cos(\theta_1) = \lambda_\text{min}(K_1+K_1^\dag)/2$.
Note that there exists $W \in \myUgrp(d'n-n)$ such that
$(I_n \oplus W) (\tilde U \oplus I_{d'n-2n})(I_n \oplus W)^\dag$ satisfies the constraints in Eq.~\eqref{eqn-problem-original-min-U} and 
thus 
\begin{align}
\varphi \le
\maxnorm{\tilde{U}} 
=
\cos^{-1}
\left[
\frac{1}{2} \lambda_\text{min} \left( K_{1} + K_{1}^\dag \right)
\right].
\label{eqn-U-upper-bound1}
\end{align}

Next, 
we lower bound $\varphi$.
Consider $U'$ satisfying the constraints in Eq.~\eqref{eqn-problem-original-min-U}.
By the interlacing inequalities (see, e.g., Ref.~\cite{Fan1957}), 
because $(K_1+K_1^\dag)/2$ is the principal submatrix of $(U'+U'^\dag)/2$,
the eigenvalues $a_1\ge\dots\ge a_{d'n}$ of $(U'+U'^\dag)/2$ and the eigenvalues $b_1\ge\dots\ge b_n$ of $(K_1+K_1^\dag)/2$ satisfy
$$
a_{d'n} \le b_n \le a_n,
$$
and so
$$
\cos^{-1}(a_{d'n}) \ge \cos^{-1}(b_n).
$$
If $U'$ has eigenvalues $\exp(i \theta_j)$, where $j=1,\dots,d'n$ and $\theta_j \in (-\pi,\pi]$,
then $a_{d'n}=\cos(\max_j |\theta_j|)$, giving
$$
\max_j |\theta_j|
\ge
\cos^{-1}
\left[
\frac{1}{2} \lambda_\text{min} \left( K_{1} + K_{1}^\dag \right)
\right].
$$
Thus, \eqref{eqn-problem-original-min-U} is bounded by
\begin{align}
\varphi
\ge 
\cos^{-1}
\left[
\frac{1}{2} \lambda_\text{min} \left( K_{1} + K_{1}^\dag \right)
\right].
\end{align}
Combining with Eq.~\eqref{eqn-U-upper-bound1} gives
\begin{align}
\varphi
=
\cos^{-1}
\left[
\frac{1}{2} \lambda_\text{min} \left( K_{1} + K_{1}^\dag \right)
\right].
\end{align}

Finally, we optimize $V$ in Eq.~\eqref{eqn-U-general-form} to obtain $\maxnorm{\mathcal K}$.
Note that $\varphi$ which corresponds to the optimal solution of $U'$ after adjusting the last $d'n-n$ columns depends only on the principal submatrix of $U'$.
Thus, 
\begin{align}
\maxnorm{\mathcal K}
&=
\cos^{-1}
\left[
\max_{\mathbf{v}: \: \lVert \mathbf{v} \rVert = 1}
\frac{1}{2} \lambda_\text{min} \left( K_{\mathbf v} + K_{\mathbf v}^\dag \right)
\right]
\end{align}
where $\mathbf v \in \IC^{d+1}$ is the first row of $V$.
Here, $K_{\mathbf v}=\sum_{j=1}^{d+1} v_j K_j$ represents the principal submatrix of $U$, where $\mathbf v=[v_1,\dots,v_{d+1}]$.
Taking into account $K_{d+1}=0$ gives the claim of the theorem.
\end{proof}

We remark that $\cos \maxnorm{\mathcal K} \ge 0$.

\section
{Time-energy costs for special channels}
\label{sec-special-channels}

In this section, we use Theorem~\ref{thm-TE-general-solution} to compute the time-energy costs for a class of channels which includes the qudit depolarizing channels and projector channels as special cases.

\begin{lemma}
\label{lemma-trace-zero-Kraus}
{\rm
Any channel $\mathcal K$ 
can be described by
an equivalent form with the
Kraus operators $\{ K_j \in \IC^{n \times n}: j=1,\ldots,d \}$ satisfying
\begin{align}
\tr({K}_j) &= 0, \:\: j=2,\ldots,d.
\nonumber
\end{align}
}
\end{lemma}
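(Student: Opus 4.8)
The plan is to exploit the unitary freedom in the Kraus representation of a channel together with the linearity of the trace. I would begin by recalling that two lists of Kraus operators of the same length represent the same channel if and only if they are related by a unitary mixing: $\{K_j\}_{j=1}^d$ and $\{K_i'\}_{i=1}^d$ give the same $\mathcal K$ precisely when $K_i' = \sum_{j=1}^d u_{ij} K_j$ for some $U = (u_{ij}) \in \myUgrp(d)$. Since we are asked only to reproduce $\mathcal K$ with $d$ operators, a $d \times d$ unitary suffices and no padding by extra zero operators is needed.

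Next I would collect the traces into a vector $\mathbf t = (\tr K_1, \dots, \tr K_d)^T \in \IC^d$. Because the trace is linear, the mixing above sends $\mathbf t$ to $U\mathbf t$; that is, $\tr K_i' = \sum_j u_{ij}\tr K_j = (U\mathbf t)_i$. The whole statement thus reduces to a single linear-algebra task: find a unitary $U$ for which $U\mathbf t$ is supported on its first coordinate alone, so that $\tr K_i' = 0$ for $i = 2,\dots,d$.

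This step is routine. If $\mathbf t = 0$ there is nothing to prove. Otherwise I would take any $U \in \myUgrp(d)$ whose first row is $\bar{\mathbf t}^{\,T}/\lVert \mathbf t\rVert$ and whose remaining rows extend this to an orthonormal basis of $\IC^d$ by Gram--Schmidt. Writing $u_1,\dots,u_d$ for the rows, one checks $(U\mathbf t)_i = \lVert \mathbf t\rVert\,\langle u_1, u_i\rangle = \lVert \mathbf t\rVert\,\delta_{1i}$, so the new operators $K_i' = \sum_j u_{ij} K_j$ satisfy $\tr K_1' = \lVert\mathbf t\rVert$ and $\tr K_i' = 0$ for $i \ge 2$, which is exactly the claim.

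The one place where an otherwise trivial argument can go astray — and which I would flag as the main obstacle — is the conjugation convention in the mixing formula. Since $\tr K_i' = \sum_j u_{ij}\tr K_j$ carries \emph{no} conjugate on $u_{ij}$, the vanishing condition asks that the $i$-th row of $U$ annihilate $\mathbf t$ under the bilinear pairing, not the Hermitian one; this is precisely why the first row must be chosen as $\bar{\mathbf t}^{\,T}/\lVert\mathbf t\rVert$ rather than $\mathbf t^{\,T}/\lVert\mathbf t\rVert$, after which ordinary orthonormality of the rows delivers the desired cancellation. Getting this conjugate right is the only subtle point in the proof.
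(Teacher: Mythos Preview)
Your proof is correct and follows essentially the same approach as the paper: both invoke the unitary freedom $K_i' = \sum_j u_{ij}\tilde K_j$ in the Kraus representation and observe that the induced action on the trace vector $\mathbf t = (\tr\tilde K_j)_j$ lets one rotate all but the first component to zero. Your construction of $U$ with first row $\bar{\mathbf t}^{\,T}/\lVert\mathbf t\rVert$ reproduces exactly the paper's explicit formula $K_1 = (\sum_j |\tr\tilde K_j|^2)^{-1/2}\sum_j \overline{\tr\tilde K_j}\,\tilde K_j$, and your care with the conjugation convention makes precise what the paper leaves implicit.
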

\begin{proof}
Two sets of Kraus operators $\{K_1,\dots,K_d\}$ and $\{\tilde{K}_1,\dots,\tilde{K}_d\}$ describe the same quantum channel if and only if
\begin{equation}
{K}_i=\sum_{j=1}^d w_{ij} \tilde{K}_j, \text{ for } i=1,\dots,d 
\label{eqn-equivalent-Kraus}
\end{equation}
and for some unitary matrix $W\equiv[w_{ij}]$ of dimension $d$ (see, e.g., Theorem~8.2 of Ref.~\cite{Nielsen2000}).
By taking the trace of Eq.~\eqref{eqn-equivalent-Kraus}, we see that there must exist $W$ that can bring $d-1$  terms to zero.
In particular, we have
\begin{equation}
\label{eqn-equiv-K1}
K_1=\left( \sum_{j=1}^d | \tr (\tilde{K}_j) | ^2 \right)^{-\frac{1}{2}} \sum_{j=1}^d \tr^\dag (\tilde{K}_j)  \tilde{K}_j .
\end{equation}
\end{proof}

(If $d=1$, 
we can pad the channel with $K_2=0$ to make Lemma~\ref{lemma-trace-zero-Kraus} automatically hold.)

\begin{lemma}
\label{lemma-trace-zero-Kraus-lower-bound}
{\rm
For any channel $\mathcal K$ that can be described by
Kraus operators $\{ K_j \in \IC^{n \times n}: j=1,\ldots,d \}$ of the form
\begin{align}
\tr({K}_j) &= 0, \:\: j=2,\ldots,d,
\nonumber
\end{align}
we have
\begin{equation}
\label{eqn-trace-zero-Kraus-lower-bound}
\cos^{-1} 
\left[
\frac{1}{n} \left\lvert \tr \left( K_1 \right) \right\rvert
\right]
\le
\maxnorm{\mathcal{K}} .
\end{equation}
}
\end{lemma}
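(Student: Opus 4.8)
The plan is to invoke Theorem~\ref{thm-TE-general-solution}, which already supplies a closed form for $\maxnorm{\mathcal{K}}$, and then reduce the claimed inequality to an elementary statement about minimum eigenvalues and traces. Since $\cos^{-1}$ is monotonically decreasing on $[-1,1]$, the desired bound $\cos^{-1}[\frac{1}{n}|\tr(K_1)|] \le \maxnorm{\mathcal{K}}$ is equivalent to the reversed inequality between the arguments, namely $\max_{\mathbf v}\frac12 \lambda_\text{min}(K_{\mathbf v}+K_{\mathbf v}^\dag) \le \frac1n|\tr(K_1)|$. Hence it suffices to bound $\frac12 \lambda_\text{min}(K_{\mathbf v}+K_{\mathbf v}^\dag)$ uniformly over all admissible $\mathbf v$ with $\lVert \mathbf v \rVert \le 1$, since the same bound then survives the maximization.

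The key inequality I would use is that for any Hermitian $n \times n$ matrix $M$, the smallest eigenvalue is no larger than the average of the eigenvalues, i.e.\ $\lambda_\text{min}(M) \le \frac1n \tr(M)$. Applying this to $M = K_{\mathbf v}+K_{\mathbf v}^\dag$ yields
\[
\tfrac12\lambda_\text{min}(K_{\mathbf v}+K_{\mathbf v}^\dag) \le \tfrac{1}{2n}\tr(K_{\mathbf v}+K_{\mathbf v}^\dag) = \tfrac1n \myRe\!\left[\tr(K_{\mathbf v})\right],
\]
where the last equality uses $\tr(K_{\mathbf v}^\dag)=\overline{\tr(K_{\mathbf v})}$.

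Next I would bring in the trace-zero hypothesis. Because $\tr(K_j)=0$ for $j\ge 2$, the sum $\tr(K_{\mathbf v})=\sum_{j=1}^d v_j\tr(K_j)$ collapses to the single term $v_1\tr(K_1)$. Therefore the right-hand side becomes $\frac1n \myRe[v_1\tr(K_1)] \le \frac1n|v_1|\,|\tr(K_1)| \le \frac1n|\tr(K_1)|$, where the final step uses $|v_1|\le \lVert \mathbf v \rVert \le 1$. Taking the maximum over $\mathbf v$ preserves this bound, and undoing the monotone $\cos^{-1}$ gives exactly \eqref{eqn-trace-zero-Kraus-lower-bound}.

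There is no substantial obstacle here; the whole argument is a short chain of elementary estimates. The only points requiring care are bookkeeping the direction of the inequality introduced by the decreasing $\cos^{-1}$, and recognizing that the normalization provided by Lemma~\ref{lemma-trace-zero-Kraus} is precisely what makes $\tr(K_{\mathbf v})$ reduce to a single term. One should also note that the chain is tight exactly when the maximizing $\mathbf v$ concentrates on the first coordinate and $K_1+K_1^\dag$ is a scalar multiple of $I_n$, which foreshadows the exact computations for the depolarizing and projector channels in the sequel.
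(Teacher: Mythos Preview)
Your proof is correct and follows essentially the same route as the paper: invoke Theorem~\ref{thm-TE-general-solution}, bound $\lambda_{\min}$ by the normalized trace, collapse $\tr(K_{\mathbf v})$ to $v_1\tr(K_1)$ using the trace-zero hypothesis, and then use $|v_1|\le 1$. Your version is slightly more explicit about the monotonicity of $\cos^{-1}$ and the final $|v_1|\le\lVert\mathbf v\rVert\le 1$ step, but the argument is the same.
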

\begin{proof}
We consider the middle term of Eq.~\eqref{eqn-TE-general-solution}:
\begin{align*}
\frac{1}{2} \lambda_\text{min} \left( K_{\mathbf v} + K_{\mathbf v}^\dag \right)
&
\le \frac{1}{2 n} \sum_{i=1}^n \lambda_i \left( K_{\mathbf v} + K_{\mathbf v}^\dag \right)
\\
&=
\frac{1}{2 n} \tr \left( K_{\mathbf v} + K_{\mathbf v}^\dag \right)
\\
&=
\frac{1}{n} \myRe \left[ \tr \left( K_{\mathbf v} \right) \right]
\\
&=
\frac{1}{n} \myRe \left[ v_1 \tr \left( K_1 \right) \right]
\end{align*}
where the first line is because the minimum is no greater than the average and $\lambda_i$ denotes the $i$th eigenvalue.
Maximizing over $\mathbf v$ gives the claim.
\end{proof}
\begin{theorem}[Time-energy lower bound]
\label{thm-lower-bound}
{\rm
For any channel $\mathcal K$ described by
Kraus operators $\{ K_j \in \IC^{n \times n}: j=1,\ldots,d \}$,
we have
\begin{equation}
\cos^{-1} 
\left[
\frac{1}{n} 
\sqrt{
\sum_{j=1}^d
\left\lvert \tr \left( K_j \right) \right\rvert^2
}
\right]
\le
\maxnorm{\mathcal{K}} .
\end{equation}
}
\end{theorem}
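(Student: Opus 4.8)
The plan is to reduce the statement to Lemma~\ref{lemma-trace-zero-Kraus-lower-bound} by passing to the special Kraus representation furnished by Lemma~\ref{lemma-trace-zero-Kraus}. The only nontrivial ingredient is the observation that the quantity $\sum_{j=1}^d |\tr(K_j)|^2$ is invariant under the unitary freedom in choosing a Kraus representation, so I would establish this first.

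First I would record the invariance. If $\{K_1,\dots,K_d\}$ and $\{\tilde K_1,\dots,\tilde K_d\}$ represent the same channel, then by Eq.~\eqref{eqn-equivalent-Kraus} there is a unitary $W=[w_{ij}]$ with $K_i = \sum_{j} w_{ij}\tilde K_j$. Taking traces gives $\tr(K_i) = \sum_{j} w_{ij}\tr(\tilde K_j)$, i.e.\ the vector $\mathbf{t}=(\tr(K_1),\dots,\tr(K_d))^\top$ is the image of $\tilde{\mathbf t}=(\tr(\tilde K_1),\dots,\tr(\tilde K_d))^\top$ under $W$. Since $W$ is unitary, $\lVert\mathbf t\rVert=\lVert\tilde{\mathbf t}\rVert$, so $\sum_{j=1}^d |\tr(K_j)|^2$ does not depend on the chosen representation.

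Next I would invoke Lemma~\ref{lemma-trace-zero-Kraus} to obtain an equivalent representation $\{K_1,\dots,K_d\}$ with $\tr(K_j)=0$ for $j=2,\dots,d$; in this representation the invariant collapses to $\sum_{j=1}^d |\tr(K_j)|^2 = |\tr(K_1)|^2$, and by the invariance just proved this equals $\sum_{j=1}^d |\tr(\tilde K_j)|^2$ computed in the original representation. (The same value is also visible directly from Eq.~\eqref{eqn-equiv-K1}: taking the trace there gives $\tr(K_1)=\big(\sum_{j}|\tr(\tilde K_j)|^2\big)^{1/2}$.) Applying Lemma~\ref{lemma-trace-zero-Kraus-lower-bound} to this trace-zero representation yields $\cos^{-1}\!\big[\tfrac1n|\tr(K_1)|\big]\le \maxnorm{\mathcal K}$, and substituting $|\tr(K_1)|=\big(\sum_{j=1}^d|\tr(\tilde K_j)|^2\big)^{1/2}$ gives exactly the claimed bound, since $\maxnorm{\mathcal K}$ depends only on the channel and not on its representation.

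I expect no serious obstacle: the result is essentially the combination of the two preceding lemmas. The one point that needs care — and the natural place for a gap — is justifying that the sum of squared trace magnitudes is a representation-independent quantity, so that the bound proved for the convenient trace-zero form transfers back to the original Kraus operators. Everything else is routine.
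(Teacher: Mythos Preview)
Your proposal is correct and follows essentially the same approach as the paper: the paper's proof consists of the single sentence ``This follows from Lemma~\ref{lemma-trace-zero-Kraus} and Lemma~\ref{lemma-trace-zero-Kraus-lower-bound},'' and you have simply unpacked this, making explicit the needed identity $|\tr(K_1)|=\big(\sum_j|\tr(\tilde K_j)|^2\big)^{1/2}$ (which the paper leaves implicit via Eq.~\eqref{eqn-equiv-K1}). Your additional remark on unitary invariance of $\sum_j|\tr(K_j)|^2$ is a clean alternative justification of that same identity.
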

\begin{proof}
This follows from Lemma~\ref{lemma-trace-zero-Kraus} 
and Lemma~\ref{lemma-trace-zero-Kraus-lower-bound}.
\end{proof}

\begin{theorem}[Time-energy for special channels]
\label{thm-K1-is-I}
{\rm
For any channel $\mathcal K$ that can be described by
Kraus operators $\{ K_j \in \IC^{n \times n}: j=1,\ldots,d \}$ of the form
\begin{equation}
\label{eqn-channel-class1}
\begin{aligned}
{K}_1 &= \alpha I 
\text{ where $\alpha \in \IC$}
\\
\tr({K}_j) &= 0, \:\: j=2,\ldots,d,
\end{aligned}
\end{equation}
its time-energy cost is
\begin{equation}
\maxnorm{\mathcal{K}} =
\cos^{-1} 
\lvert \alpha \rvert .
\end{equation}
}
\end{theorem}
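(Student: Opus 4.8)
The plan is to sandwich $\maxnorm{\mathcal{K}}$ between matching lower and upper bounds that both evaluate to $\cos^{-1}|\alpha|$. The lower bound comes for free: the hypothesis~\eqref{eqn-channel-class1} places the Kraus operators in exactly the trace-zero form demanded by Lemma~\ref{lemma-trace-zero-Kraus-lower-bound}, and since $K_1=\alpha I$ we have $\tr(K_1)=n\alpha$, so that lemma yields $\cos^{-1}|\alpha|=\cos^{-1}\bigl[\tfrac1n|\tr(K_1)|\bigr]\le\maxnorm{\mathcal{K}}$. Thus the whole task reduces to establishing the reverse inequality.

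For the upper bound I would exploit the maximization in Theorem~\ref{thm-TE-general-solution} by testing a single, well-chosen $\mathbf v$ rather than solving the optimization in full. The key observation is that restricting $\mathbf v$ to be supported on its first coordinate collapses $K_{\mathbf v}$ to a scalar multiple of the identity, which trivializes the minimum-eigenvalue term. Concretely, take $\mathbf v=(v_1,0,\dots,0)$ with $|v_1|=1$, which is admissible since $\lVert\mathbf v\rVert=1\le 1$; then $K_{\mathbf v}=v_1\alpha I$ and
\[
\tfrac12\,\lambda_\text{min}\bigl(K_{\mathbf v}+K_{\mathbf v}^\dag\bigr)
=\tfrac12\bigl(v_1\alpha+\overline{v_1\alpha}\bigr)
=\myRe\bigl(v_1\alpha\bigr).
\]
Choosing the phase $v_1=\bar\alpha/|\alpha|$ (any $v_1$ works when $\alpha=0$) makes this equal to $|\alpha|$, so the maximum over all admissible $\mathbf v$ is at least $|\alpha|$.

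Since $\cos^{-1}$ is decreasing on $[-1,1]$ and returns a value in $[0,\pi]$, the inequality $\max_{\mathbf v}\tfrac12\lambda_\text{min}(K_{\mathbf v}+K_{\mathbf v}^\dag)\ge|\alpha|$ translates directly into $\maxnorm{\mathcal{K}}\le\cos^{-1}|\alpha|$, and combining this with the lower bound gives the claimed equality. I do not anticipate a genuine obstacle; the only points needing a little care are checking that the single-coordinate test vector is admissible, handling the degenerate case $\alpha=0$, and keeping track of the monotonicity of $\cos^{-1}$ so the two bounds close up in the correct direction.
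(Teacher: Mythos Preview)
Your proposal is correct and follows essentially the same approach as the paper: both obtain the lower bound directly from Lemma~\ref{lemma-trace-zero-Kraus-lower-bound} via $\tfrac{1}{n}|\tr(K_1)|=|\alpha|$, and both obtain the upper bound by restricting the test vector in Theorem~\ref{thm-TE-general-solution} to the form $\mathbf v=(e^{i\theta_1},0,\dots,0)$ and optimizing the phase to get $\tfrac12\lambda_\text{min}(K_{\mathbf v}+K_{\mathbf v}^\dag)=|\alpha|$. Your handling of the degenerate case $\alpha=0$ and the explicit remark on the monotonicity of $\cos^{-1}$ are slightly more careful than the paper's own write-up, but the argument is the same.
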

\begin{proof}
From Eq.~\eqref{eqn-trace-zero-Kraus-lower-bound}, we have
$\cos^{-1} |\alpha| \le \maxnorm{\mathcal K}$.

On the other hand, by choosing a particular $\mathbf v$,
\begin{align*}
&
\max_{\mathbf v}
\frac{1}{2} \lambda_\text{min} \left( K_{\mathbf v} + K_{\mathbf v}^\dag \right)
\\
\ge &
\max_{\theta_1}
\frac{1}{2} \lambda_\text{min} \left( e^{i \theta_1} K_1 + e^{-i \theta_1} K_1^\dag \right)
\\
= &
|\alpha| .
\end{align*}
Therefore, 
$\maxnorm{\mathcal K} \le \cos^{-1} |\alpha|$ and the claim is proved.
\end{proof}

Note that this theorem is slightly more general than Eq.~(52) of Ref.~\cite{Fung:2013:Time-energy} in which $\alpha$ is real and positive.
As noted in Ref.~\cite{Fung:2013:Time-energy}, channels satisfying Eq.~\eqref{eqn-channel-class1} include the qudit depolarizing channels.
In the following, we show that projector channels also satisfy Eq.~\eqref{eqn-channel-class1}.

In general, given a channel,
we can find an equivalent form according to Lemma~\ref{lemma-trace-zero-Kraus} and compute the new $K_1$ using Eq.~\eqref{eqn-equiv-K1}.
If this new $K_1$ satisfies Eq.~\eqref{eqn-channel-class1}, then the time-energy cost of the channel is immediately given by Theorem~\ref{thm-K1-is-I}.
Otherwise, we can lower bound it using Theorem~\ref{thm-lower-bound}.

\begin{theorem}[Projector channels]
{\rm
For any channel $\mathcal K$ that can be described by
Kraus operators $\{ K_j \in \IC^{n \times n}: j=1,\ldots,d \}$ of the form
$K_j=s_j P_j$ with $P_j=P_j^2=P_j^\dag$ being a projector of rank $r$ and $s_j \in \IC$,
we have
\begin{equation}
\maxnorm{\mathcal{K}} =
\cos^{-1} 
\left( \sqrt{\frac{r}{n}} \right) .
\end{equation}
}
\end{theorem}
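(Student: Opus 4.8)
The plan is to reduce this statement to Theorem~\ref{thm-K1-is-I} by exhibiting an equivalent Kraus representation in which one operator is a positive multiple of the identity. The engine is Lemma~\ref{lemma-trace-zero-Kraus} together with its explicit formula~\eqref{eqn-equiv-K1} for the rebuilt first operator. Throughout I write $\tilde K_j = s_j P_j$ for the given operators and reserve $K_1$ for the recombined one.

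First I would extract the single scalar identity on which everything rests. Since $P_j^\dag P_j = P_j^2 = P_j$, we have $\tilde K_j^\dag \tilde K_j = |s_j|^2 P_j$, so the completeness relation $\sum_j \tilde K_j^\dag \tilde K_j = I_n$ reads $\sum_{j=1}^d |s_j|^2 P_j = I_n$. Taking the trace and using that every $P_j$ has rank $r$, hence $\tr(P_j) = r$, gives $r \sum_j |s_j|^2 = n$, i.e. $\sum_j |s_j|^2 = n/r$.

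Next I would feed the $\tilde K_j$ into Eq.~\eqref{eqn-equiv-K1}. Here $\tr(\tilde K_j) = s_j r$, so the normalization factor satisfies $\sum_j |\tr(\tilde K_j)|^2 = r^2 \sum_j |s_j|^2 = r^2 \cdot (n/r) = rn$, while the numerator collapses by completeness: $\sum_j \overline{\tr(\tilde K_j)}\, \tilde K_j = r \sum_j |s_j|^2 P_j = r I_n$. Substituting into~\eqref{eqn-equiv-K1} yields $K_1 = (rn)^{-1/2} \cdot r I_n = \sqrt{r/n}\, I_n$. Thus the equivalent representation produced by Lemma~\ref{lemma-trace-zero-Kraus} automatically has $K_1 = \alpha I$ with $\alpha = \sqrt{r/n}$ real and positive, and $\tr(K_j) = 0$ for $j \ge 2$; that is, it satisfies Eq.~\eqref{eqn-channel-class1}.

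Finally, Theorem~\ref{thm-K1-is-I} applies verbatim and gives $\maxnorm{\mathcal K} = \cos^{-1}|\alpha| = \cos^{-1}\sqrt{r/n}$. I do not expect a genuine obstacle: the only substantive point is recognizing that the trace-weighted recombination in Eq.~\eqref{eqn-equiv-K1} collapses to a multiple of the identity precisely because the conjugated weights $\overline{\tr(\tilde K_j)}$ reintroduce exactly the factor $|s_j|^2$ that, via the completeness relation, reassembles $\sum_j |s_j|^2 P_j$ back into $I_n$. The hypothesis that all the $P_j$ share the common rank $r$ is what makes $\tr(P_j) = r$ uniform across the sum and lets this scalar bookkeeping close cleanly.
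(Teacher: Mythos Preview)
Your proof is correct and follows essentially the same route as the paper: both feed the projector operators into Eq.~\eqref{eqn-equiv-K1}, use the completeness relation $\sum_j |s_j|^2 P_j = I_n$ together with the trace identity $\sum_j |s_j|^2 = n/r$ to show the recombined first operator is $\sqrt{r/n}\,I_n$, and then invoke Theorem~\ref{thm-K1-is-I}. Your write-up is in fact slightly more explicit than the paper's in verifying that the numerator $\sum_j \overline{\tr(\tilde K_j)}\,\tilde K_j$ collapses to $rI_n$, but the underlying argument is identical.
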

\begin{proof}
Note that $\tr(K_j)=s_j r$ for all $j$.
Using Lemma~\ref{lemma-trace-zero-Kraus} and Eq.~\eqref{eqn-equiv-K1},
an equivalent description of $\mathcal K$ satisfies
\begin{align*}
K_1'&=\frac{1}{\sqrt{\sum_{i=1}^d |s_i|^2}} I,
\\
\tr({K}_j') &= 0, \:\: j=2,\ldots,d.
\end{align*}
Next, note that the trace-preserving constraint of quantum channels implies that $I_n=\sum_{j=1}^d K_j^\dag K_j = \sum_{j=1}^d |s_j|^2 P_j$ and taking the trace of it gives $n/r=\sum_{j=1}^d |s_j|^2$.
Then by Theorem~\ref{thm-K1-is-I}, the claim is proved.
\end{proof}

\section{Efficient numerical solution using semidefinite programming}
\label{sec-sdp}

Our main result \eqref{eqn-TE-general-solution} in Theorem~\ref{thm-TE-general-solution} can be formulated as an SDP.
We can write $K_j=A_j +i B_j$, where $A_j,B_j \in \IC^{n\times n}$ are Hermitian, and also write $v_j=a_j - i b_j$ with $a_j, b_j \in \IR$ for $j=1,\dots,d$.
Then the problem is equivalent to 
\begin{equation}
\begin{aligned}
\label{eqn-sdp-prob1}
\max \phantom{xxxx}
& \lambda_\text{min} \left(
\sum_{i=1}^d (a_j A_j + b_j B_j)
\right)
\\
\text{s.t.} \phantom{xxxx}& 
\sum_{j=1}^d (a_j^2 + b_j^2) \le 1
\end{aligned}
\end{equation}
where the maximization is over 
$a_1, b_1, \dots, a_d, b_d \in \IR$.
We show that this problem can be cast as a complex SDP which has the following form:
\begin{equation}
\label{eqn-standard-sdp}
\begin{aligned}
\min \phantom{xxxx}
& g^T x
\\
\text{s.t.} \phantom{xxxx}& x_1 G_1+\dots+x_mG_m +H \succeq 0
\end{aligned}
\end{equation}
where the minimization is over $x \in \IR^m$. Here, $g \in \IR^m$, and $G_1,\dots,G_m, H$ are complex 
Hermitian matrices.
Note that a complex SDP can always be cast as a real SDP in which $G_1,\dots,G_m, H$ are real symmetric matrices.

Note that we can rewrite the objective function as follows:
\begin{equation}
\begin{aligned}
\min \phantom{xxxx}
& -\lambda
\\
\text{s.t.} \phantom{xxxx}& \sum_{j=1}^d (a_j^2 + b_j^2)\le 1
\\
&
\sum_{i=1}^d (a_j A_j + b_j B_j) \succeq \lambda I
\end{aligned}
\end{equation}
where the maximization is over 
$a_1, b_1, \dots, a_d, b_d, \lambda \in \IR$.
Next, we convert this inequality constraint to a positive semidefinite constraint.
Let $c=\sqrt{\sum_{j=1}^d (a_j^2 + b_j^2)}$.
Consider the matrix
\begin{align*}
C=
\begin{bmatrix}
1&c\\
c&1
\end{bmatrix}
\end{align*}
which has eigenvalues $1\pm c$.
Thus, the constraint $c \le 1$ is equivalent to the constraint $C \succeq 0$.
Note that $C \oplus I_{2d-1}$ is unitarily similar to 
$$
a_1 F_1 + \dots a_d F_d+ b_1 F_{d+1} + \dots + b_d F_{2d} + I_{2d+1}
$$
where $F_j=E_{j,2d+1}+E_{2d+1,j}$ and $E_{i,j}$ is an $(2d+1)\times(2d+1)$ matrix with one at the $(i,j)$ position.
Then, the problem becomes
\begin{equation}
\begin{aligned}
\min \phantom{x}
& -\lambda
\\
\text{s.t.} \phantom{x}& 
a_1 F_1 + \dots a_d F_d+ b_1 F_{d+1} + \dots + b_d F_{2d} + I_{2d+1} \succeq 0
\\
&
\sum_{i=1}^d (a_j A_j + b_j B_j) - \lambda I \succeq 0
\end{aligned}
\end{equation}
where the maximization is over 
$a_1, b_1, \dots, a_d, b_d, \lambda  \in \IR$.
This is in the SDP form \eqref{eqn-standard-sdp}.
Thus, one can apply standard positive semidefinite programming to determine the time-energy cost of a general quantum channel
given in Eq.~\eqref{eqn-TE-general-solution}.

\section{Mathematical remarks}
\label{sec-remarks}
\begin{itemize}

\item
We may replace $K_1$
by $e^{i \theta_1} K_1$ without affecting the quantum channel.
Thus, we can select $\theta_1 \in [0,2\pi)$ to maximize the smallest eigenvalue of $e^{i \theta_1} K_1+e^{-i \theta_1} K_1^\dag$. 
To this end, we can use the
numerical range of $K_1$
defined as
$$
W(K_1)=\{ \bra{x} K_1 \ket{x} : \ket{x} \in \IC^n,  \braket{x}{x}=1 \} .
$$
This is a compact convex set in $\IC$, and can be obtained as the intersection of the half spaces
\begin{align*}
Q_{\theta_1}=\big\{ & \mu \in \IC: e^{i \theta_1}\mu + e^{-i \theta_1}\bar{\mu} \ge 
\\
&
\lambda_\text{min}(e^{i \theta_1}K_1+e^{-i \theta_1}K_1^\dag)\big\}, 
\:\:\:\: \theta_1 \in [0,2\pi) .
\end{align*}
So, maximizing the smallest eigenvalue of $e^{i \theta_1}K_1+e^{-i \theta_1}K_1^\dag$
corresponds to finding the half space
$Q_{\theta_1}$ whose intersection with the unit disk has the smallest area.

\item
A heuristic approach to upper bound Eq.~\eqref{eqn-TE-general-solution} is as follows.
We separately consider $v_j K_j, j=1,\dots,d$ and let $v_j=c_j \exp(i \theta_j)$ where $c_j \in \IR_+$.
Choose $\theta_j \in [0,2\pi)$ to
maximize the smallest eigenvalue $\sigma_j$ of $e^{i \theta_j} K_j+e^{-i \theta_j} K_j^\dag$.
This is equivalent to rotating the numerical range $W(K_j)$ so that the left support line is as close to the right side as possible.
Then choose a nonnegative unit vector $(c_1,\dots,c_d)$ to maximize $\sum_{j=1}^d c_j \sigma_j$.
If $K_{\mathbf v}=\sum_{j=1}^d c_j \exp(i \theta_j) K_j$, then $\lambda_\text{min} \left( K_{\mathbf v} + K_{\mathbf v}^\dag \right) \ge \sum_{j=1}^d c_j \sigma_j$.
Thus, $\maxnorm{\mathcal K} \le \cos^{-1} (\sum_{j=1}^d c_j \sigma_j/2)$.

\end{itemize}

\section{Conclusions}
\label{sec-conclusions}

The physical meaning of the time-energy cost is its relation with the channel fidelity~\cite{Fung:2014:fidelity}.
In this paper, we show that the time-energy cost of any general quantum channel is given by Eq.~\eqref{eqn-TE-general-solution}.
It has closed formulas for special channels.
For general channels, the problem of finding the time-energy cost can be formulated as an SDP which can be solved efficiently on computers.

\section*{Acknowledgments}%
Li would like to thank Mikio Nakahara for some inspiring discussion.
Chau and Fung were partially supported by the Hong Kong RGC grant
No. 700712P. 
Sze was partially supported by the Hong Kong RGC grant PolyU 502512.
Li
was supported
by a USA NSF grant and a Hong Kong RGC grant. 
He was visiting the Hong Kong Polytechnic University, the University of Hong Kong, the Shanghai University, and the Institute for Quantum Computing at the University of Waterloo in the Spring of 2014. He would like to thank the support and hospitality of colleagues of these institutions.

\vspace{50pt}

\nonumsection{References}
\bibliographystyle{IEEEtran2}
\bibliography{paperdb}

\end{document}